\begin{document}

\markboth{Vladimir N. Lukash and Vladimir N. Strokov}
{Space--times with integrable singularity}

%
\catchline{}{}{}{}{}
%

\title{SPACE--TIMES WITH INTEGRABLE SINGULARITY}

\author{VLADIMIR N. LUKASH}

\address{Astro Space Center of the Lebedev Physical Institute
\\ ul. Profsoyuznaya, 84/32, 117997 Moscow Russia \\
lukash@asc.rssi.ru}

\author{VLADIMIR N. STROKOV\footnote{On leave from Astro Space Center of the Lebedev
Physical Institute.}}

\address{Departamento de F{\'i}sica -- ICE -- Universidade
Federal de Juiz de Fora \\Juiz de Fora, MG, Brasil -- CEP:
36036-330\\
strokov@asc.rssi.ru}

\maketitle

\begin{history}
\received{Day Month Year}
\revised{Day Month Year}
\end{history}

\begin{abstract}
We use the phenomenological approach to study properties of
space-time in the vicinity of the Schwarzschild black-hole
singularity. Requiring finiteness of the Schwarzschild-like
metrics we come to the notion of integrable singularity that is,
in a sense, weaker than the conventional singularity and allows
the (effective) matter to pass to the white-hole region. This
leads to a possibility of generating a new universe there. Thanks
to the gravitational field of the singularity, this universe is
already born highly inflated ('singularity-induced inflation')
before the ordinary inflation starts.

\keywords{Black holes; singularity; phenomenology.}
\end{abstract}

\ccode{PACS numbers: 04.70.-s, 04.20.Dw, 98.80.Bp}

\section{Introduction}\label{introduction}

Since General Relativity (GR) came into being, the scientific
community has had various opportunities to verify that it gives a
viable description of phenomena that include strong gravitational
fields and relativistic velocities. Its experimental basis once
consisting of the three classical GR effects (perihelion
precession, deflection of light and red shift) has recently
acquired one of its crucial contributions -- the Cosmological
Standard Model (CSM) of the visible Universe (see
Ref.~\refcite{WMAP} and references therein). Extrapolating this
CSM to the past~\cite{lm} leads to one of the main features of GR
-- singularities.

As is long known,\cite{penrose,hawkingpenrose,hawkingellis} the
latter are a common place in general-relativistic solutions
including the most physical ones, the above-mentioned
Friedmann--Robertson--Walker universe and black holes which
probably reside in central parts of many galaxies.\cite{SMBH}
Although one can think of space--times where a singularity appears
'all of a sudden',\cite{Ellis-Schmidt, krasnikov} it is usually
understood that some curvature invariants grow infinitely as one
approaches the singular hypersurface (later on we deal with this
type of singularity). However, one expects that the Einstein
equations will somehow change near the singularity. For example,
this change may be due to quantum effects.\cite{shapiro} In
cosmology this possibility was first studied in
Refs.~\refcite{starobinskii,anderson-1,anderson-2,anderson-3}
while for the case of black holes in
Ref.~\refcite{frolovvilkovsky}.

Currently, the exact form of the quantum corrections remains
unknown and the variety of suggested modifications complicates the
studies. Under the circumstances it is natural to parametrize
properties of gravitational field near the singularity in a way
similar to how probable gravity modifications~\cite{ferreira} are
parametrized by the notions of dark matter and dark energy in
cosmology.

In this paper we phenomenologically model ultrahigh-curvature
processes near the singularity by introducing a continuous
\textit{effective} mass distribution~$m(r,t)$ such that
$m(r,t)\rightarrow 0$ as $r\rightarrow 0$. Note that we do not
discuss physical nature of the mass function~$m(r,t)$ while we do
study the consequences of the equivalent assumption that metrics
potentials become finite near $r=0$. We also assume that the mass
function $m(r,t)$ has the property that it appears to be
point-like at large distances.

It should be also emphasized that our assumption is milder than
the usual requirement that a physically admissible space--time be
completely free of singularities. As applied to the case of
spherically symmetric geometries it leads to the notion of
\textit{integrable} singularity.\cite{lms-1,lms-2} It is
\textit{still} a singularity, but a softer one. The motivation for
considering it is the following. Infinite tidal forces near
singularity are that unattractive feature that makes it
singularity in the physical sense (at least, this is true for the
singularities that are believed to be formed in nature). From this
point of view, it is quite remarkable that there may exist a
singularity with infinite tidal forces only in some directions, so
that they do not influence the material flow that supports the
geometry. In this sense, the integrable singularity allows us to
connect the interior of the black hole to a daughter universe that
is born with $\mathbb{R}\times\mathbb{S}^{2}$ symmetry. Although
transversal tidal forces are infinite, they do not influence the
flow that respects the space--time symmetry and passes the
singularity to form the new universe. This situation is
reminiscent of what happens in cosmological models with dust and
generalized Chaplygin gas.\cite{Chap-singularity}

Also, though finite, the gravitational potential is deep, which
enhances the initial volume of the new universe. Inflation as an
intermediate stage in this universe is only needed to make it
isotropic. The result suggests that the mechanism of generating
new universes inside a black hole can work without special
assumptions on finiteness of density or curvature like those used
in Refs.~\refcite{markov-1,markov-2,markovfrolov,BFrolov96} or in
bouncing models.\cite{Nov66} Also note that it is crucial for this
mechanism that the black-hole interior contain no static regions,
so we do not consider here wormholes (see, for example,
Refs.~\refcite{Poplawski,BrSt}) or solutions with an internal
$R$-region.\cite{dymnikova}

In the next two sections we introduce the definition of integrable
singularity and analyze its properties. Then in Sect.~\ref{model}
we present two toy models containing the one. The first of them
gives the answer to the question of the source of the
Schwarzschild geometry while the second is an example of a
universe emerging from the black-hole interior. Finally, in
Sect.~\ref{discussion} we discuss the effective-matter approach,
the type of integrable singularity and the suggested scenario of
generating new universes.

\section{Integrable singularity}\label{sing}

Let us consider a class of metrics that, first, have the
Schwarzschild form in the region where quantum effects are
negligible, second, inherit the global Killing $t$-vector from the
vacuum solution and, third, contain finite
quantities\,\footnote{We use the signature $(+\,-\,-\,-\,)$, the
sign
$R^{\alpha}{}_{\beta\gamma\delta}=\partial_{\gamma}\Gamma^{\alpha}_{\beta\delta}-\ldots$
and set the speed of light $c=1$.}:
\begin{equation}
\label{general-metrics}
ds^{2}=N^2(1+2\Phi)\,dt^{\,2}-\frac{dr^{\,2}}{1+2\Phi}-
r^2d\Omega^{2}\,,
\end{equation}
where $d\Omega^{2}=d\theta^{2}+\sin^{2}{\theta} d\varphi^{2}$ is
the line element on a unit 2-sphere in angular
coordinates~$(\theta,\varphi)$. More exactly, the functions $N(r)$
and $\Phi(r)$ are finite along with their two first derivatives
everywhere including $r=0$, and when $r$ exceeds some
characteristic positive value $r_{0}$
\begin{equation}\label{vacuum}
N(r\geq r_{0})=1\,,\quad \Phi(r\geq r_{0})=-\frac{GM}{r}\,,
\end{equation}
where $M$ is the external mass of the black hole and $G$ the
gravitational constant. The well-known coordinate divergence at
the horizon $\Phi=-1/2$ which separates the $R$- ($\Phi>-1/2$) and
$T$-regions ($\Phi<-1/2$) is treated in the usual
way.\cite{frolov-novikov}

The surface $r=r_{0}$ is a sort of border where quantum processes
start playing a significant role as we go inside the black hole.
We will assume that $M>>M_{P}\sim 10^{19}$~GeV which ensures that
the Hawking radiation does not affect our further consideration
and that the surface $r=r_{0}$ resides under the horizon and,
hence, is spacelike. Since we required the finiteness of the
metric potentials even at $r=0$, it is only logical that the
r.h.s. of the Einstein equations $G_{\mu\nu}=8\pi GT_{\mu\nu}$ is
non-zero somewhere in $r<r_{0}$. To model the situation we
consider the effective energy-momentum tensor (EMT) that emerges
in a triggered\,\footnote{\label{Note2} Physically, it may be
thought of as a phase transition caused by quantum-gravitational
processes of vacuum polarization and matter creation in intensive
variable gravitational field (though such an interpretation is
valid only in the quasiclassical limit). One of the candidates to
play the role of the 'temperature' is the squared Riemann tensor
$\mathcal{I}=R_{\alpha\beta\gamma\delta}R^{\alpha\beta\gamma\delta}$
which in the Schwarzschild solution equals to $\,48(GM/r^3)^2$.
Then the characteristic value $r_{0}$ can be estimated from the
physical dimension: $\mathcal{I}\sim 1/l_{P}^{4}$, where
$l_{P}\sim 10^{-33}$~cm is the Planck length. To give an example,
for a black hole with mass of the order of that of the Sun
$r_{0}\sim 10^{12}l_{P}$.} way:
\begin{eqnarray}
T_{\mu\nu}&\ne& 0 \quad \mbox{when} \quad 0\le r\le r_0<2GM\,,\nonumber \\
T_{\mu\nu}&=&0 \quad \mbox{when} \quad r > r_0\,.
\end{eqnarray}

Let us find out which properties of the effective matter support
the general spherical space--time when the functions $N$ and
$\Phi$ depend on both coordinates $r$ and $t$. Making use of eq.
(\ref{general-metrics}) one obtains from the GR
equations:\cite{landafshits}
\begin{equation}
\label{rm} \Phi = -\frac{G m}{r}\,,
\end{equation}
where the finite {\it mass function}
\begin{equation}
\label{m} m=m\!\left(r,t\right)= 4\pi\!\int_{0} T_t^t r^2 dr= m_0
-4\pi r^2\!\!\int\! T_t^r dt
\end{equation}
vanishes at $r=0$ (m(0,t) = 0)\, thanks to the finiteness
condition applied on the potential $\Phi$. The function
$m_0=m_0(r)$ is determined by initial/boundary conditions. We also
conclude that for the function $m(r,t)$ to be finite, $T_t^t r^2$
must be integrable at $r=0$. Hereafter, by definition, the
distribution of longitudinal pressure/energy density (in
$T/R$-regions, respectively) over~$r$ integrable in this sense is
called to possess an {\it integrable}
singularity.\cite{lms-1,lms-2}

In our case the l.h.s. of Eq.~(\ref{rm}) depends only on $r$.
Hence, $T^{r}_{t}=0$ and the EMT compatible with
metrics~(\ref{general-metrics}) is
\begin{equation}
T^{\mu}_{\nu}=diag(-p,\varepsilon,-p_{\perp},-p_{\perp}).
\end{equation}
Note that, since the EMT emerges only under the horizon, the
$({}^{r}_{r})$-component plays the role of energy density, because
the coordinate $r$ plays that of time.\cite{novikov} The rest of
the Einstein equations read:
\begin{eqnarray}
\frac{N^\prime}{N} &=&\frac{4\pi
Gr^2\!\left(\varepsilon+p\right)}{2Gm-r}\,, \quad m(r)=-4\pi\!\int_0 p\,r^2 dr\, \label{N}\\
p_{\perp}&=&\frac{N^\prime}{2N}\!\left(\frac{m}{4\pi
r^2}-r\varepsilon\right)
-\frac{\left(r^2\varepsilon\right)^\prime}{2r}\,, \label{pressure}
\end{eqnarray}
where the prime stands for the derivative with respect to $r$.
Extending the integral in Eq.~(\ref{N}) up to $r_0$ one obtains
the external mass
\begin{equation}
\label{m1} M=-4\pi\!\int_0^{r_0}\!p\,r^2 dr\,.
\end{equation}

In order to integrate these equations it is necessary to specify
the effective matter Lagrangian or its equation-of-state. Of
course, their exact form can be found only in the complete theory
of quantum gravity. However, in the phenomenological framework we
can depict some generic features of the space--times under
consideration already now.

\section{Some general properties}\label{properties}

\begin{theorem}
Let the potential $\Phi(r=0)\equiv\Phi_{0}$ and its second
derivative be finite at $r=0$, and its first derivative vanish no
slower than $r$ as $r\rightarrow 0$. Then tidal forces remain
finite on world line of the matter flow and
\begin{equation}
\label{divergence}
\begin{array}{c}
p=\displaystyle\frac{\Phi_{0}}{4\pi Gr^{2}}\qquad\mbox{as}\quad
r\rightarrow 0\,.
\end{array}
\end{equation}
\end{theorem}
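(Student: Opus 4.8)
The plan is to handle the two assertions independently: the pressure asymptotics (\ref{divergence}) is purely algebraic given the field equations already in hand, while the statement on tidal forces is a curvature computation carried out in the frame comoving with the supporting flow.

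First I would establish (\ref{divergence}). Differentiating the mass relation in (\ref{N}) gives $p=-m'/(4\pi r^2)$, and (\ref{rm}) gives $m=-r\Phi/G$, hence $m'=-(\Phi+r\Phi')/G$. Substituting,
\[
p=\frac{\Phi+r\Phi'}{4\pi G r^{2}}.
\]
As $r\to 0$ one has $\Phi\to\Phi_0$ and, since $\Phi'=O(r)$, also $r\Phi'=O(r^2)$, so the leading term is $\Phi_0/(4\pi G r^2)$; the correction pieces $(\Phi-\Phi_0)/(4\pi G r^2)$ and $\Phi'/(4\pi G r)$ remain bounded exactly because $\Phi''$ is finite and $\Phi'$ vanishes no slower than $r$. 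This is (\ref{divergence}).

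For the tidal forces I would work in the $T$-region, where $1+2\Phi<0$, so that $r$ is timelike and the flow supporting the geometry runs along $\partial_r$. Introducing the orthonormal comoving tetrad $e_{\hat 0}=\sqrt{-(1+2\Phi)}\,\partial_r$ with $e_{\hat 1}\propto\partial_t$, $e_{\hat 2}\propto\partial_\theta$, $e_{\hat 3}\propto\partial_\varphi$, I would first check that the integral curves of $e_{\hat 0}$ are geodesic (the radial acceleration cancels once $\Gamma^{r}_{rr}=-\Phi'/(1+2\Phi)$ is used), so that the tidal forces are genuinely the electric components $E_{\hat{i}\hat{j}}=R_{\hat{i}\hat 0\hat{j}\hat 0}$. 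Spherical symmetry leaves two independent entries: a longitudinal one $E_{\hat 1\hat 1}$ from $R_{trtr}$ and a transverse one $E_{\hat 2\hat 2}=E_{\hat 3\hat 3}$ from $R_{\theta r\theta r}$. Writing the metric as $ds^2=A\,dt^2-B\,dr^2-r^2d\Omega^2$ with $A=N^2(1+2\Phi)$ and $B=1/(1+2\Phi)$ and reducing the two curvature components, I expect
\[
E_{\hat 2\hat 2}=-\frac{\Phi'}{r},\qquad
E_{\hat 1\hat 1}=-\frac{N''(1+2\Phi)}{N}-\frac{3N'\Phi'}{N}-\Phi''.
\]
The transverse force is finite precisely when $\Phi'/r$ is bounded, i.e. when $\Phi'$ vanishes no slower than $r$, and the longitudinal one is finite because $N,N',N'',\Phi,\Phi',\Phi''$ are all finite at $r=0$; together these give finiteness of the whole tidal tensor on the flow.

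I expect the transverse component to be the crux. Taken naively, $R_{\theta r\theta r}$ contributes a piece behaving as $Gm/r^3\sim-\Phi_0/r^2$, which is the divergent Schwarzschild transverse tidal force; finiteness rests on a cancellation in which the effective energy density, carried by $m'$, supplies a compensating $Gm'/r^2$ so that $E_{\hat 2\hat 2}$ collapses to the much milder $-\Phi'/r$. Verifying this cancellation cleanly, and correctly propagating the product $A=N^2(1+2\Phi)$ through the reduction of $R_{trtr}$ for the longitudinal term, is the main bookkeeping hazard; the Schwarzschild limit $N=1$, $\Phi=-GM/r$, which should return $E_{\hat 1\hat 1}=2GM/r^3$ and $E_{\hat 2\hat 2}=-GM/r^3$, is the check I would use to catch sign and factor slips.
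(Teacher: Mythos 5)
Your proposal is correct and follows essentially the same route as the paper: the pressure asymptotics come from combining $\Phi=-Gm/r$ with $m'=-4\pi p r^2$ to get $\Phi+r\Phi'=4\pi Gpr^2$, and the tidal statement comes from evaluating the orthonormal-frame Riemann components along the radial flow, where only $R_{\hat r\hat a\hat r\hat b}$ enter and these reduce to $\Phi''$ and $-\Phi'/r$, finite under the hypotheses, while the sole divergent component $R_{\hat\theta\hat\varphi\hat\theta\hat\varphi}=-2\Phi/r^2$ is absent from the radial deviation equation. Your extra care about geodesy of the flow and the $Gm/r^3$ versus $Gm'/r^2$ cancellation inside $-\Phi'/r$ is a sound elaboration of the same argument, not a different one.
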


\begin{proof}
In order to study the story of an extended body falling to
$r=0$ we follow Ref.~\refcite{MTW}. Non-vanishing components of
the Riemann tensor in a locally inertial reference
frame~$(\hat{t},\hat{r},\hat{\theta},\hat{\varphi})$ are given by
the formulae:
\begin{equation}\label{curvature}
R_{\hat{t}\hat{r}\hat{t}\hat{r}}=\Phi''\,,\qquad
R_{\hat{t}\hat{\theta}\hat{t}\hat{\theta}}=R_{\hat{t}\hat{\varphi}\hat{t}\hat{\varphi}}=\frac{\Phi'}{r}\,,
\end{equation}
\begin{equation}\label{curvature-1}
R_{\hat{\theta}\hat{\varphi}\hat{\theta}\hat{\varphi}}=-\frac{2\Phi}{r^{2}}\,,\qquad
R_{\hat{r}\hat{\theta}\hat{r}\hat{\theta}}=R_{\hat{r}\hat{\varphi}\hat{r}\hat{\varphi}}=-\frac{\Phi'}{r}\,.
\end{equation}
The only non-zero component of the 4-velocity of the matter is
$u^{r}$. Therefore, the equation, which governs how fast two
particles separated by the spatial vector $\xi^{\hat{a}}$,
$\hat{a}=(\hat{t},\hat{\theta},\hat{\varphi})$, accelerate
relative to each other, reads:
\begin{equation}\label{tides}
\frac{D^{2}\xi^{\hat{a}}}{d\hat{r}^{2}}=-R_{\hat{r}\hat{a}\hat{r}\hat{b}}\xi^{\hat{b}}.
\end{equation}
As one can see, under conditions of the theorem the right-hand
side is finite. Moreover, since
$R_{\hat{\theta}\hat{\varphi}\hat{\theta}\hat{\varphi}}$ is the
only component in~(\ref{curvature}), (\ref{curvature-1}) that
diverges as $r\rightarrow 0$, the tidal forces remains finite on
any radial geodesic~$u^{\mu}=(u^{t},u^{r},0,0)$.

Also, Eq.~(\ref{rm}) yields
$$
\Phi'r+\Phi=4\pi Gpr^2.
$$
Hence, $p=\Phi_{0}/4\pi Gr^{2}+O(1)$ as $r\rightarrow 0$. Since
$p$ is supposed to comprise every correction to the Einstein
tensor (see Introduction), this implies that all these corrections
diverge as $r\rightarrow 0$.
\end{proof}

In what follows it is convenient to extend the values of $r$ to
the real domain~$(-\infty,+\infty)$. This choice allows one to
describe the black- and white-hole regions in a unified manner.
Indeed, a physical way to probe a space--time is to study test
particles' trajectories. For example, one can mark the surfaces
$r=const$ by launching a particle along the radius
($d\theta=d\varphi=0$) and measuring its consecutive positions. As
usual, in the $T$-region of the black hole the particle moves
towards smaller values of $r$. Also it has no problem passing the
hypersurface $r=0$. Hence, as the particle passes to the
$T$-region of the white hole, we can consider it moving on to
smaller, but now negative, values of $r$. Another advantage of
this choice is that it fits the fact that $r$ as time (as it is in
the $T$-regions) must 'tick' in one direction.

We can now classify all possible models by their properties under
the time inversion $r\rightarrow -r$ with respect to $r=0$ into
reversible and irreversible ones. Reversible
solutions~(\ref{general-metrics}) are given by even functions of
$r$ whereas the second-type models are non-invariant under the
inversion and described by asymmetric profiles. From geometrical
point of view, models of different types depend differently on
intrinsic and extrinsic curvature of the Schwarzschild-like
space--time (see Appendix). From physical point of view, the
mechanism that underlies the reversible solutions is the vacuum
polarization as described by the quantum corrections (see
Introduction). In the irreversible regime some transformation of
gravitational degrees of freedom into material ones must take
place.

We are about to give a toy model of each type. As soon as the
external mass $M$ is positive, the longitudinal pressure is, on
average, negative (see Eq.~(\ref{m1})), and for simplicity we
choose it to be vacuum-like, $p=-\varepsilon$. Then $N=1$ and the
energy density is found from Eq.~(\ref{pressure}) which becomes
\begin{equation}
\label{tB} \frac{d\left(\varepsilon r^2\right)}{rdr}=-2p_\perp\,.
\end{equation}

\section{Two toy models}\label{model}

\subsection{Black-white hole}\label{black-white}

Let us consider a symmetric step for the profile of the transverse
pressure~$p_{\perp}$:
\begin{equation}
p_\perp^{\left(A\right)}= p_0\cdot\theta\left(r_0^2-r^2\right).
\end{equation}
Then $M\equiv 8\pi r_0^3p_0/3$ is the black-hole mass, and
integrating Eq.~(\ref{tB}) with the initial condition $\varepsilon
(r\ge r_0)=0$ yields
\begin{equation}\label{epA}
\varepsilon^{\left(A\right)}=p_\perp^{\left(A\right)}\cdot\left(\frac{r_0^2}{r^2}-1\right),
\end{equation}

and the potential~(\ref{rm})
\begin{equation}
\Phi^{(A)}=\left\{
\begin{array}{lc}
-\displaystyle\frac{3GM}{2r_0}\left(1-\frac{1}{3}\left(\frac{r}{r_0}\right)^{2}\right)\,,
& |r|\leq r_0\,,\\
-\displaystyle\frac{GM}{|r|}\,, & |r|\geq r_0\,.
\end{array}
\right.
\end{equation}

A model of this type can be referred to as a 'black-white hole',
because the twin white hole is linked to its black counterpart by
the matter flow. The spacetime geometry is seemingly sourced by
the effective EMT. However, this EMT {\it alone} cannot sustain
the black hole, because the latter resides in the absolute past
with respect to the former. The actual source is located on the
border between another black-and-white pair, which lies under the
point of intersection of the horizons $r=2GM$ (see
Fig.~\ref{eternal}). In other words, the Penrose diagram of the
black-white hole is an infinite chain of elementary Penrose
diagrams engaged by material regions that source the geometry. Let
us emphasize that, presumably (see footnote~\ref{Note2}), $\Phi_0
= \min\Phi^{(A)}= -\frac{3GM}{2r_0}<< -1$.

This also helps to clarify the question of the source of the
Schwarzschild space--time. Indeed, in the limit
$r_0\stackrel{M=const}{\longrightarrow} 0$ we obtain a black
\textit{or} white hole maximally extended onto the empty space
with a delta-like source localized at $r=0$:
\begin{equation}
\label{d}
\varepsilon=-p=2p_\perp=M\,\frac{\delta\left(r\right)}{2\pi
r^2}\,,
\end{equation}
where $\delta(r)=\theta'(r)$ is one-dimensional delta-function.

\begin{figure}
\includegraphics[width=\columnwidth]{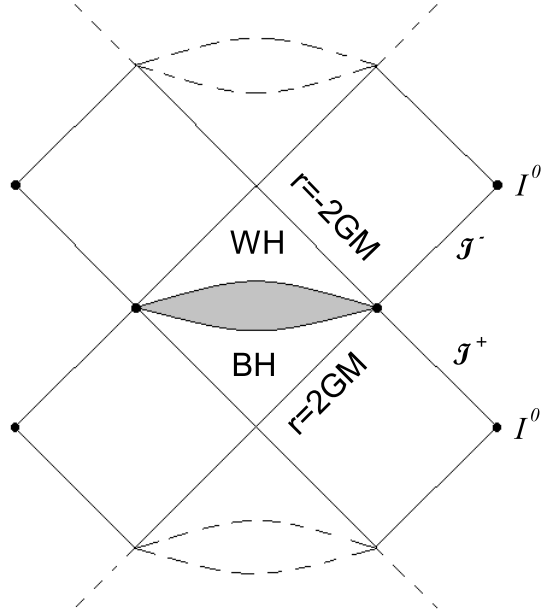}
\caption{Penrose diagram of a black-white hole with the shaded
region depicting the matter that links the respective $T$-regions.}%
\label{eternal}
\end{figure}

\begin{figure}
\includegraphics[width=\columnwidth]{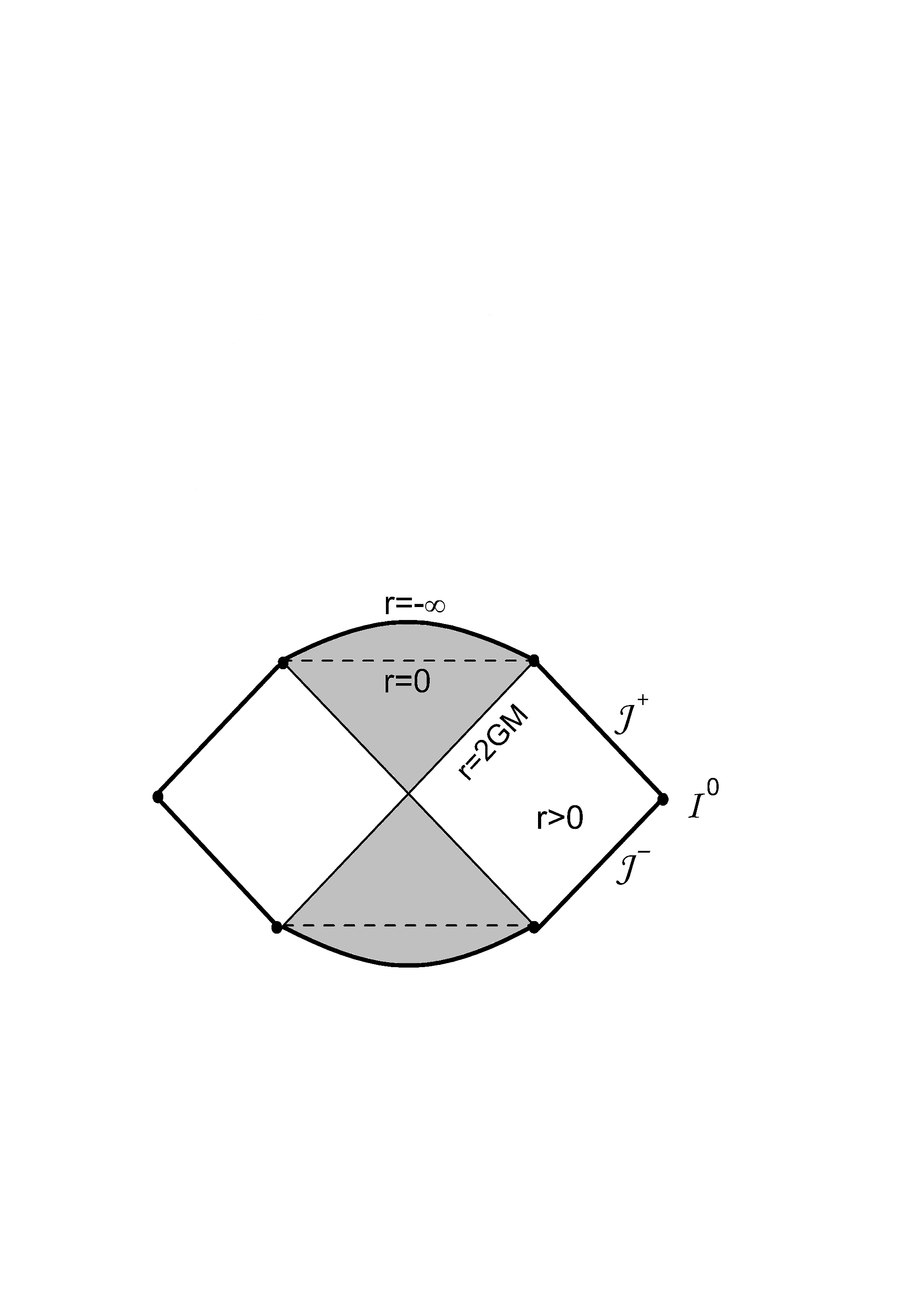}
\caption{Penrose diagram of an asymptotically de Sitter astrogenic universe.}%
\label{astrogenic}
\end{figure}

\subsection{Cosmology-like solution}\label{cosmology}

For this example we consider an asymmetric step for the transverse
pressure profile:
\begin{equation}
p_\perp^{\left(B\right)}=p_0\cdot\theta\left(rr_0-r^{2}\right)-
p_1\cdot\theta\left(-r\right),
\end{equation}
where $p_1$ is a positive constant. Analogously to the previous
section we obtain:
\begin{equation}\label{epB}
\varepsilon^{\left(B\right)}=-p_\perp^{\left(B\right)}+p_0\frac{r_0^2}{r^2}\cdot\theta\left(r_0-r\right).
\end{equation}

\begin{equation}
\label{potential-B} \Phi^{(B)}=\left\{
\begin{array}{lc}
-\displaystyle\frac{3GM}{2r_0}\left(1+\frac{p_1}{3p_0}\left(\frac{r}{r_0}\right)^{2}\right)\,,
& r\leq 0\,,\\
\Phi^{(A)}\,, & r\geq 0\,.
\end{array}
\right.
\end{equation}

The nature of this solution becomes clearer if we introduce the
proper time $d\tau=-|1+2\Phi|^{-1/2}dr$. Then for $\tau\in
(0,\infty)$ Eqs.~(\ref{epB})(\ref{potential-B}) yield the solution
asymptotically approaching the de Sitter (see
Fig.~\ref{astrogenic}):
\begin{equation}
\label{au} r = - g_0 \frac{\sinh(H_1\tau)}{H_1} \,,\qquad
\varepsilon =\frac{3H_1^2}{8 \pi G} \left(
1+\frac{1+g_0^{-2}}{3\sinh^2(H_1\tau)} \right),
\end{equation}
\[
ds^2=d\tau^2- g_0^2 \left(\cosh^2\!\left(H_1\tau\right)dt^2+
\frac{\sinh^2\!\left(H_1\tau\right)}{H_1^2}d\Omega^{2}\right),
\]
where $g_0^{2}\equiv-2\Phi_0-1=(\frac{3GM}{r_0}-1) > 1/2$ and the
constant $H_1= (8\pi Gp_1/3)^{1/2}$ acquires any value independent
of the external mass of the black hole. Therefore, the $T$-region
of the white hole may, in principle, incorporate an entire
universe. Such a universe originating from inside a black hole can
be dubbed \textit{astrogenic}. The space of the created universe
is homogeneous and huge owing to two parameters from
Eq.~(\ref{au}): the large prefactor $g_0\sim \sqrt{GM/r_0}\sim
r_0/l_P\sim (M/M_P)^{1/3}\gg 1$ and a constant $H_1$ responsible
for inflation. While the former stems from the violent matter
production in the deep gravitational potential of integrable
singularity the latter is ordinary space inflation due to phase
transition in already expanding effective matter. The toy example
also shows that although the astrogenic universe is anisotropic in
the beginning (with the symmetry of the Kantowski-Sachs model), it
may become isotropic at late times:
$-p^{\left(B\right)}=\varepsilon^{\left(B\right)}\rightarrow
p_1=-p_{\perp}^{\left(B\right)}$ as $\tau\rightarrow+\infty$.

\section{Discussion}\label{discussion}

Until the theory of quantum gravity is established the
phenomenological effective-matter approach can be very useful in
the study of singularities. One, however, has to be careful of how
to apply it. This we can learn from the analogy with classical
hydrodynamics. Long before a theory of microscopic motion was
developed many paradoxes of the ideal fluid hydrodynamics had been
known to be solved by introducing a coefficient of
viscosity.\cite{landafshits-6} Note that in the macroscopic theory
the value of the coefficient is found by measuring the fluid
motions, that is, inferred from the l.h.s. of the Navier--Stokes
equations. The same scheme should be adopted for the effective
Einstein equations. Indeed, results obtained so far in the field
of gravity seem to be more instructive about the properties of
space--times rather than those of effective matter.

Since it is little known about the effective equation-of-state, it
is tempting to consider cases with high symmetry like that of the
de Sitter vacuum.\cite{Farhi87} The latter has become so popular
in the literature that references are too numerous to cite.
However, the above considerations indicate that in the black-hole
singularity problem it is reasonable to require the
$\mathbb{R}\times\mathbb{S}^{2}$ symmetry of the EMT. Another
argument follows from the well-known BLK regime\cite{BLK} that
predicts a break of isotropy in the vacuum solution. Although it
was argued\cite{poisson-israel-1} that quantum effects may dump
the anisotropy, until the exact physics under the Planck scale is
known the anisotropic choice is nothing less natural than the de
Sitter.

Moreover, the second model from the preceding section offers an
advantage. It demonstrates that a newly born universe can be
connected to the black-hole interior without assuming density
and/or curvature being finite. Here it is worth clarifying what
'connected' means, which brings us to the question how to classify
the integrable singularity. According to Ref.~\refcite{tipler}, it
is a strong singularity. Indeed, writing
metrics~(\ref{general-metrics}) locally as
$ds^{2}=\delta\tau^{2}-\delta x^{2}-\delta y^{2}-\delta z^{2}$ we
find that the comoving volume built on three spacelike vectors in
the $T$-region (see Eq.~(\ref{ell}))
\begin{eqnarray}\label{volume}
\delta V&=&\delta x\delta y\delta z=\\ \nonumber
{}&=&N|1+2\Phi|^{1/2}r^{2}\det{\{\omega_{ij}\}}dtdy^{1}dy^{2}\rightarrow
0 \quad \mbox{as} \quad r\rightarrow 0.
\end{eqnarray}
Therefore, on one hand, an extended body falling towards the
singularity will be destroyed by tidal forces. But notice that the
volume~(\ref{volume}) tends to zero because of the infinitesimal
radius of the 2-sphere. It means that it is basically the
\textit{transverse} tidal forces that destroy the body whereas the
longitudinal forces stay finite. This fact also follows from
Eqs.~(\ref{curvature})-(\ref{tides}), because in the extended body
there always exist points with non-zero angular components of
4-velocity. However, it is more important that the flow that
respects the space-time symmetry does not suffer from those forces
and it is this flow that forms the new universe after the
integrable singularity.

Here we would like to make a few comments on the suggested
scenario of generating new universes. First, the model proposed is
based on the notion of eternal black hole. However, its further
generalization to the case of a black hole originating from
collapse does not offer principal difficulties.\cite{lms-2}
Second, there is no need of generalizing it to the cases of
rotating or charged black holes. This is thanks to the 'mass
inflation' phenomenon: as it was shown in
Ref.~\refcite{poisson-israel}, an observer in the interior static
region of a rotating (or charged) black hole would measure
internal mass the factor $\sim\sqrt{M/M_{P}}$ larger than $M$.
Because of the internal mass this large, the geometry of the
interior would be similar to the Schwarzschild one. Third, thanks
to the gain factor $g_{0}>>1$ in Eq.~(\ref{au}) the volume of the
new universe is already large when $\tau\sim H_{1}^{-1}$, i.e.
before the ordinary inflation enhances it. This effect can be
referred to as 'singularity-induced inflation', because this kind
of inflation does not require any specific field and is of pure
gravitational nature. Finally, residual cylindrical anisotropy in
the present-day data would indicate the astrogenic origin of our
universe. Although some authors claim to have discovered the
global anisotropy (see, for example,
Refs.~\refcite{komberg-1,komberg-2}), the current precision of
cosmological observations is insufficient to state that, and
future observations should clarify the situation.

It is also necessary to point out limitations of the
phenomenological approach. The main of them is that there is
ambiguity in determining the effective equation-of-state. Speaking
of the longitudinal pressure, the vacuum-like equation-of-state
used above seems to be fairly common, because it prevents
intersection of layers in the more realistic case of
collapse.\cite{lms-2} As for the other pressure component, behind
the phase transition scenario we have drafted (see
footnote~\ref{Note2}), of course, there must be a specific model
to be developed. Also, processes like passing of an elementary
particle through the integrable singularity are beyond the scope
of the effective-matter approach. Nevertheless, within its
framework we were able think of a situation where a singularity
does not 'spoil' the space--time and, more than that, leads to a
natural cosmogenesis scenario. The very possibility of it
indicates that the requirement that a physical space--time must be
completely free of singularities in the mathematical sense may be
too strong indeed. It may be a pity that a star voyager cannot
travel to the new universe, but it is still not the reason to say
that nature is afraid of singularities.

\section*{Acknowledgments}

The authors thank E.V.~Mikheeva and I.D.~Novikov for valuable
comments. We are also grateful to I.L.~Shapiro for critical
remarks. This work is supported by Federal Programm ``Scietific
Personel'' (nn. 16.740.11.0460 and 8422), by the Russian
Foundation for Basic Research (OFI 11-02-12168 and 12-02-00276)
and by grant no. NSh 2915.2012.2 from the President of Russia.
V.N.S. is grateful to FAPEMIG for support.

\appendix

\section{Intrinsic and extrinsic curvature of Schwarzschild-type space--times}

In the spirit of the ADM formalism\cite{MTW} the
Schwarzschild-type geometry~(\ref{general-metrics}) can be
represented as a (2+2)-split. To begin with, remind that the
quantity~$r$ plays two roles. On one hand, it is the curvature
radius of a 2-dimensional sphere. On the other hand, it is one of
the coordinates. In order to discriminate one from the other, let
us present the metrics of an arbitrary spherically symmetric 4D
space--time split into a pair of 2D spaces:\cite{polishchuk-2,GHP}
\begin{equation}
\label{tau} dX^{\,2}= n_{IJ}\,dx^I dx^J
\end{equation}
and
\begin{equation}
\label{ell} dY^{\,2}=\gamma_{ij}\,dy^i dy^j \equiv
r^2\omega_{ij}\,dy^i dy^j\,,
\end{equation}
where the functions $n_{IJ}$ and $r$ depend on the variables
$x^I\!=(x^1,x^2)\in \mathbb{R}^2$ and are {\it independent} of the
internal 2D coordinates $y^i$ of the closed homogeneous and
isotropic 2-surface $\mathbb{S}^2$ of the unit curvature
$d\Omega^{2}\!= \omega_{ij}dy^i dy^j$, $\gamma_{ij}\!\equiv
r^2\omega_{ij}$. If the coordinates are chosen to be angular,
$y^i\!=(\theta,\varphi)$, we have $\omega_{ij}\!=
diag{\,(1,\,\sin^2\theta)}\,$ with $\,\theta\in [0, \pi]$ and
$\varphi\in [0, 2\pi)$.

By choosing the four coordinates of the covering grid one can turn
four non-diagonal components of the full metric tensor into zero
$g_{Ii}\!=0$ and write it in the orthogonal reference frame
$x^\mu\!=(x^I,y^i)$:
\begin{equation}
\label{s} ds^{\,2}= g_{\mu\nu}\,dx^\mu dx^\nu = dX^{\,2}-dY^{\,2},
\end{equation}
where $g_{\mu\nu}= diag{\,(n_{IJ}, -\gamma_{ij})}$ is the metrics
of the spherically symmetric geometry in the orthogonal split
$2+2$\,. The energy--momentum tensor corresponding to (\ref{s}) is
$T_{\mu\nu}\!=diag{\,(T_{IJ},p_\perp\gamma_{ij})}\,$, \,where
$p_\perp$ is the transversal pressure.

At this point the metric potential $r$ in Eq.~(\ref{ell}) can be
introduced in the invariant manner as radius of the {\it
intrinsic} curvature $\rho$ of the closed $Y$-space, where
\begin{equation}
\label{R2} R_{ij}^{(Y)}=\rho\gamma_{ij}\quad\; \mbox{and} \;\quad
\rho\equiv\frac 12 R^{(Y)}\!= r^{-2}
\end{equation}
are the Ricci tensor and scalar constructed from the metrics
$\gamma_{ij}$. By definition, the 2-space $\gamma_{ij}$ and its
intrinsic curvature $\rho$ are invariant with respect to
interchanging $r$ and $-r$ while the {\it extrinsic} curvature of
$Y$ depends on the sign of $r$ and determines the orientation and
evolution of the surface~$\mathbb{S}^{2}$ in the space--time
(\ref{s}):
\begin{equation}
\label{K} \mathcal{K}_{ijI}\equiv\frac 12 \gamma_{ij,I}=
\mathcal{K}_I \gamma_{ij}\,,\quad\mathcal{K}_I\equiv\frac 12
\gamma^{ij}\mathcal{K}_{ijI} =\frac{r_{,I}}{r}\,,
\end{equation}
where the comma in the subscript stands for the partial derivative
with respect to $x^I$. This fact becomes obvious in the
coordinates where one of the variables~$x^I$ is identically equal
to $r$.

\end{document}